\newtheorem{theorem}{Theorem}
\newtheorem{proposition}{Proposition}
\begin{document}

\title{Distilling Nonlocality in Quantum Correlations}

\author{Sahil Gopalkrishna Naik}
\affiliation{Department of Physics of Complex Systems, S.N. Bose National Center for Basic Sciences, Block JD, Sector III, Salt Lake, Kolkata 700106, India.}
%\email{sahiln1112@gmail.com}

\author{Govind Lal Sidhardh}
\affiliation{Department of Physics of Complex Systems, S.N. Bose National Center for Basic Sciences, Block JD, Sector III, Salt Lake, Kolkata 700106, India.}
%\email{govindlalsidhardh@gmail.com}

\author{Samrat Sen}
\affiliation{Department of Physics of Complex Systems, S.N. Bose National Center for Basic Sciences, Block JD, Sector III, Salt Lake, Kolkata 700106, India.}
%\email{samrat9sen5@gmail.com}

\author{Arup Roy}
\affiliation{Department of Physics, A B N Seal College Cooch Behar, West Bengal 736101, India}
%\email{arup145.roy@gmail.com} 

\author{Ashutosh Rai}
\affiliation{School of Electrical Engineering, Korea Advanced Institute of Science and Technology, Daejeon 34141, Republic of Korea}
\email{ashutosh.rai@kaist.ac.kr}

\author{Manik Banik }
\affiliation{Department of Physics of Complex Systems, S.N. Bose National Center for Basic Sciences, Block JD, Sector III, Salt Lake, Kolkata 700106, India.} 
\email{manik11ju@gmail.com}

\begin{abstract}
Nonlocality, as established by seminal Bell's theorem, is considered to be the most striking feature of correlations present in space like separated events. Its practical application in device independent protocols, such as secure key distribution, randomness certification, {\it etc.}, demands identification and amplification of such correlations observed in the quantum world. In this Letter we study the prospect of nonlocality distillation, wherein, by applying a natural set of free operations (called wirings) on many copies of weakly nonlocal systems, one aims to generate  correlations of higher nonlocal strength. In the simplest Bell scenario, we identify a protocol, namely, logical OR-AND wiring, that can distil nonlocality to significantly high degree starting from arbitrarily weak quantum nonlocal correlations. As it turns out, our protocol has several interesting facets: (i) it demonstrates that set of distillable quantum  correlations has non zero measure in the full eight-dimensional correlation space, (ii) it can distil quantum Hardy correlations by preserving its structure, (iii) it shows that (nonlocal) quantum correlations sufficiently close to the local deterministic points can be distilled by a significant amount. Finally, we also demonstrate efficacy of the considered distillation protocol in detecting postquantum correlations.
\end{abstract}

\maketitle

{\it Introduction.--} One of the most celebrated non-classical aspects of quantum mechanics was pioneered by J. S. Bell in the year 1964 \cite{Bell1964} (see also \cite{Bell1966}). Bell's theorem mandates departure of quantum theory from the {\it locally causal} world view which subsequently has been confirmed in several milestone experiments led by Clauser, Aspect, Zeilinger, and others \cite{Clauser1969,Freedman1972,Aspect1976,Aspect1981,Aspect1982,Aspect1982(1),Zukowski93,Pan1998,Weihs1998,Bouwmeester1999}. Unlike other non-classical features, such as entanglement and coherence, study of nonlocality can be conducted in a device independent setting where only the input-output statistics of the device matters  and one does not need to know the inner design or working mechanisms of the device \cite{Scaran2012}. Along with foundational implications, Bell nonlocality has also been identified as the necessary resource for several important protocols \cite{Ekert1991,Barrett2009,Pironio2009,Pironio2010,Colbeck2011,Colbeck2012,Chaturvedi2015,Mukherjee2015,Pappa2015,Roy2016,Frenkel2022,Patra2022}, which, thus, makes the question of refinement or distillation of this resource practically indispensable. Study of nonlocality distillation has two major implications -- (i) practical: where one aims to distil nonlocal correlations observed in the quantum world which can be then applied to make information flow networks efficient and secure, and (ii) foundational: where the goal is to identify post quantum correlations, which, in turn, helps to understand the speciality of quantum theory among other possibilities allowed within the framework of generalized probabilistic theories. Interestingly, in Ref.\cite{Forster2009}, Forster {\it et al.} proposed a nonlocality distillation protocol that can extract nonlocality in stronger form starting with many copies of weakly nonlocal systems; this work has inspired a number of subsequent works consisting of interesting results on nonlocality distillation \cite{Brunner2009,Short2009,Brunner2011,Forster2011,HoyerJibran2010,Jibran2012,HoyerJibran2013,Ebbe2014,Tuziemski2015,Beigi2015,Brassard2015,Brito2019,Eftaxias2022,Allcock2009,Lang2014}.

The research conducted so far on nonlocality distillation is mainly focused on distilling post quantum correlations \cite{Brunner2009,Short2009,Brunner2011,Forster2011,HoyerJibran2010,Jibran2012,Ebbe2014,Tuziemski2015,Beigi2015,Brassard2015,Brito2019,Eftaxias2022}. Only a few protocols are known that successfully distil some quantum correlations \cite{Forster2009,Eftaxias2022}. The difficulty arises due to the top-down approach considered in earlier works where one starts with some parametric family of generic no-signaling (NS) correlations, and after obtaining a successful distillation protocol the aim is to check whether for some range of the parameter values the considered NS correlations allow quantum realization or not. For the simplest bipartite case, the well known analytical criterion by Tsirelson-Landau-Masanes \cite{Tsirelson1987,Landau1988,Masanes2008} and the Navascues-Pironio-Acin (NPA) criterion \cite{Navascues2007}, and in general case a hierarchy of semi-definite programming conditions \cite{Navascues2008} can serve this purpose. Only in some fortunate cases  sophisticated choices of the parametric class of NS correlations might lead to a desirable subset of quantum realizable correlations. However, the approach has severe pitfall when more input-output scenarios are considered, as the recent mathematical breakthrough by W. Slofstra and the subsequent results establish that the set of quantum correlations is not topologically closed \cite{Slofstra2020,Slofstra2019,Dykema2019}. There are only a few results that report distillation of nonlocal correlations within quantum setup \cite{Forster2009,Eftaxias2022}, albeit the nonlocal strength of the distilled correlation is low. Therefore the aspects of analytical and quantitative study for distillation of quantum nonlocal correlations remain open.

In this letter, we propose a generic framework for nonlocality distillation which overcomes limitations of the thus far proposed protocols. In contrast to the previously reported results, we intend to find out efficient distillation protocol(s) for quantum correlations. To this aim we consider the bottom-up approach. Instead of generic NS signaling correlations we start with weak  nonlocal correlations which is quantum, and then try to obtain a nonlocality distillation protocol. The set of quantum correlations being closed under {\it wirings} \cite{Allcock2009,Lang2014} assures the resulting distilled correlations to be quantum. Interestingly, we identify a simple protocol and come up with a generic approach that successfully distil nonlocality in a large class of weakly nonlocal quantum correlations. Towards this goal, first we consider a variant of nonlocality test proposed by Lucien Hardy \cite{Hardy1993}. Success probability in Hardy's test qualifies as a measure of nonlocality for Hardy's correlations~\cite{Cereceda2000}. Given two copies of a quantum Hardy correlation, we show that there exist a simple wiring that can distil Hardy nonlocality. We call this wiring logical OR-AND protocol, where OR~($\vee$) and AND~($\wedge$) functions on $2$-bits $z_1,z_2$ are defined as $\vee(z_1,z_2)=\max\{z_1,z_2\}$ and $\wedge(z_1,z_2)=\min\{z_1,z_2\}$, respectively. The OR-AND protocol allows an immediate $n$-copy generalization (see Fig.\ref{Fig:OR-AND}), which can provide a substantial distillation of Hardy's success with a sufficiently large copies of initial correlations. Further, we show that the OR-AND wiring when applied to a broader class of quantum correlations yields an interesting result: an arbitrarily small violation of the Clauser-Horne-Shimony-Holt (CHSH) \cite{Clauser1969} inequality can be amplified to a significantly higher degree. Finally, by applying our protocol we demonstrate that nonlocal correlations arbitrarily close to the extreme points of the set of local correlations are always distilled, which, in turn, establishes that set of distillable quantum correlations has non-zero measure in the full eight dimensions of the correlation space. We also study distillation of post quantum correlations, and show that OR-AND protocol becomes efficient there too. In particular, we find correlations whose post-quantum signature is established through OR-AND distillation, while the known information principles, such as nontrivial communication complexity \cite{Brassard2005} and information causality \cite{Pawlowski2009,Miklin2021}, fail to serve the purpose. In the end, we discuss novelty of the approach followed here in comparison to the existing methods on nonlocality distillation.
\begin{figure}[t!]
\centering
\includegraphics[width=0.4\textwidth]{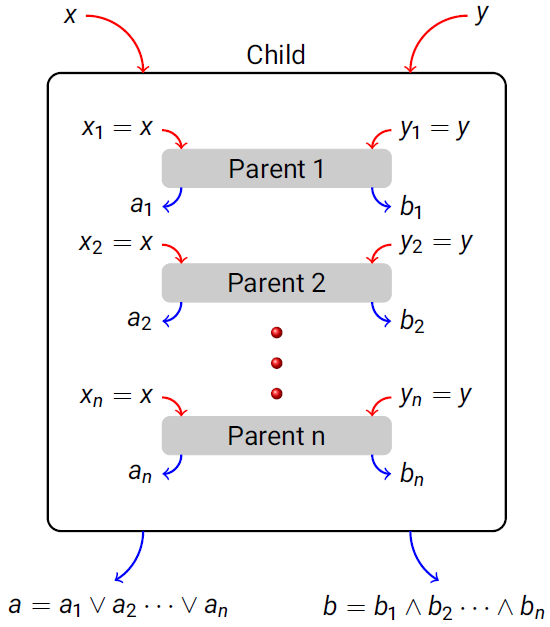}
\caption{Multi-copy OR-AND wiring. Given $n$-number of parent correlations $\{\mathrm{P}_{NS}[i]\}_{i=1}^n\subset\mathcal{NS}$, the OR-AND wiring produce a child correlation $\mathrm{P}^{(n)}_{NS}\in\mathcal{NS}$. The outcome $a$ on Alice's side for the child box is obtained as, $a=a_1\vee\cdots\vee a_n=\max\{a_1,\cdots,a_n\}$ for the input $x_1=\cdots=x_n=x$, where $x_i$ and $a_i$ are the input and output of the $i^{th}$ parent. On the Bob's side, $y_1=\cdots=y_n=y$ and $b=b_1\wedge \cdots\wedge b_n=\min\{b_1,\cdots,b_n\}$.}
\label{Fig:OR-AND}
\end{figure}

{\it Preliminaries.--} A Bell experiment involves spatially separated parties performing local measurements on the respective parts of a composite system shared among them. The simplest case considers two parties, Alice and Bob, with their respective inputs to the box denoted as $x$ and $y$, and outputs from the box denoted as $a$ and $b$, respectively, where $x,y,a,b\in\{0,1\}$; and this is generally called the 2-2-2 Bell scenario. Correlation generated by a box $\mathrm{P}$ is the set of joint input-output probabilities, {\it i.e.}, $\mathrm{P}\equiv\{p(ab|xy)\}$. Set of boxes satisfying the no-signaling (NS) condition forms an $8$-dimensional polytope $\mathcal{NS}$ having $24$ vertices~\cite{Barrett2005}: $8$ non-local vertices (Popescu-Rohrlich (PR) boxes \cite{Popescu1994}) given by $\mathrm{P}^{\alpha\beta\gamma}_{NL}\equiv\{p(ab|xy)=\frac{1}{2}\delta_{(a\oplus b,~xy\oplus \alpha x\oplus \beta y \oplus \gamma)}\}$ where $\alpha,\beta,\gamma \in\{0,1\}$, and $16$ local deterministic vertices given by $\mathrm{P}^{\alpha_1\alpha_2\beta_1\beta_2}_{L}\equiv\{p(ab|xy)=\delta_{(a,\alpha_1 x \oplus \alpha_2)}~\delta_{(b,\beta_1 y \oplus \beta_2)}\}$ where $\alpha_1,\alpha_2,\beta_1,\beta_2 \in\{0,1\}$. A correlation is termed as local {\it if and only if} it allows a decomposition of the form $p(ab|xy)=\sum p(\lambda)p(a|x,\lambda)p(b|y,\lambda)$, where $\lambda$ is some classical variable shared between Alice and Bob following a distribution $\{p(\lambda)\}$ \cite{Bell1964} (see also \cite{Bell1966}). Collection of all local correlations forms a sub polytope $\mathcal{L}$, within $\mathcal{NS}$, with $16$ local deterministic boxes as their vertices. Correlations obtained from local quantum measurements performed on some bipartite quantum state are called quantum correlations. Set of all quantum correlations $\mathcal{Q}$ forms a convex set (but not a polytope) lying strictly in between local and NS polytope, {\it i.e.}, $\mathcal{L}\subsetneq\mathcal{Q}\subsetneq\mathcal{NS}$ \cite{Goh2018}. No signaling correlations that do not belong to the set $\mathcal{L}$ are called nonlocal as they do not allow a local-causal description \cite{Bell1964}. We consider one of the eight symmetries for the nonlocal correlations witnessed by the Bell CHSH inequality
\begin{align}
\mathcal{B}\equiv \langle00\rangle-\langle01\rangle-\langle10\rangle-\langle11\rangle \leq 2,
\end{align}
where $\langle xy\rangle:=\sum_{a,b}(-1)^{a\oplus b}~p(ab|xy)$. Then, only one PR-box violates the inequality and there are exactly eight extremal local boxes that saturate the local bound. These nine boxes, which form an eight dimensional simplex~\cite{Rai2019(1)}, are as follows
\begin{align}
\left\{\mathrm{P}^{110}_{NL}\right\}\!\Rightarrow\!\mathcal{B}\!=\!4,~
\left\{\!\begin{aligned}
\mathrm{P}^{0001}_{L_1},~\mathrm{P}^{0100}_{L_2},~\mathrm{P}^{0111}_{L_3},\\
\mathrm{P}^{1101}_{L_4},~\mathrm{P}^{1111}_{L_5},~\mathrm{P}^{1000}_{L_6},\\
\mathrm{P}^{0010}_{L_7},~\mathrm{P}^{1010}_{L_8}~~~~~~~~~~~~~
\end{aligned}\right\}\!\Rightarrow\!\mathcal{B}\!=\!2.\label{symmetry}
\end{align}
The CHSH value $\mathcal{B}<2$ for all the remaining extremal nonlocal or local boxes. The choice of this symmetry is due to the simple OR-AND function description of our distillation protocol. One can always apply local reversible relabellings to switch to any of the eight symmetries (then the protocol become suitably relabelled OR-AND ). From here on, we will simply refer to the nine boxes in Eq.~(\ref{symmetry}) by dropping their superscripts. 

While violation of the Bell-CHSH inequality is a paradigmatic test for certifying nonlocality of a correlation, Hardy provided a simpler nonlocality argument for the same \cite{Hardy1993}. He showed that if the following four conditions
%\begin{subequations}
\begin{align}
p_{_{\text{Hardy}}}&\equiv p(0,0|A_0,B_0)>0, \label{h1}\\
p(0,0|A_0,B_1)&=p(0,0|A_1,B_0)=p(1,1|A_1, B_1)=0,\nonumber    
\end{align}
%\end{subequations}
are satisfied, then the resulting correlation is necessarily nonlocal. The probability $p_{_{\text{Hardy}}}$ in Eq.(\ref{h1}) quantifies strength of nonlocality of Hardy correlations. While the maximum value of $p_{_{\text{Hardy}}}$ in no-signaling set is $1/2$ (achieved with PR box), its optimal value in quantum mechanics turns out to be $(5\sqrt{5}-11)/2\approx0.09$ \cite{Rabelo2012} (see also \cite{Seshadreesan2011, Rai2022}), and it is achieved on a pure two qubit state with projective measurements. The correlation yielding the maximum Hardy nonlocality in quantum theory reads
\begin{align}
\mathrm{H}_{Q}^{\max}&=(5\sqrt{5}-11)~\mathrm{P}_{NL}\nonumber\\
&+\frac{1}{2}(7-3\sqrt{5})~\sum_{i=1}^4\mathrm{P}_{L_i}+(\sqrt{5}-2)~\mathrm{P}_{L_5},
\end{align}
and it has been shown to be an extreme point of the set $\mathcal{Q}$ \cite{Goh2018}. 

{\it Distillation of quantum nonlocality.--} In the the following parts of this letter, we first present our results on distillation of quantum nonlocal correlations. For ease of readability, here we present our main results and discuss the important proofs. The detailed analyses of some of the more technical aspects is deferred to the Supplemental. We start by presenting our results on  nonlocality distillation of Hardy's correlations, by applying the OR-AND protocol. 
\begin{theorem}\label{theo1}
The OR-AND wiring preserves the structure of quantum Hardy correlations and can efficiently distil the strength of success probability in Hardy's test of nonlocality.
\end{theorem}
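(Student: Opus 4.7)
\noindent\emph{Proof plan.} The statement splits into (i) \textbf{structural preservation} --- the OR-AND wiring sends a quantum Hardy correlation to another quantum Hardy correlation --- and (ii) \textbf{distillation} --- the child's Hardy success probability can substantially exceed the parent's. Quantumness of the child is automatic since $\mathcal{Q}$ is closed under wirings, so it suffices to verify the three Hardy zeros of Eq.~(\ref{h1}) for the child box and lower-bound its $p_{_{\text{Hardy}}}$.

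For (i), the OR on Alice's side sets $a=0$ iff every $a_i=0$, while the AND on Bob's side sets $b=1$ iff every $b_i=1$. Independence of the $n$ parent boxes together with no-signaling of their marginals then yield, for any input pair $(x,y)$,
\begin{align}
p^{(n)}(0,0|x,y) &= \prod_{i=1}^n P(a_i=0|x)-\prod_{i=1}^n P(a_i=0,b_i=1|x,y),\nonumber\\
p^{(n)}(1,1|x,y) &= \prod_{i=1}^n P(b_i=1|y)-\prod_{i=1}^n P(a_i=0,b_i=1|x,y).\nonumber
\end{align}
Evaluating the first line at $(A_0,B_1)$ and $(A_1,B_0)$, and the second at $(A_1,B_1)$, the parental Hardy zeros force $P(a_i=0|x)=P(a_i=0,b_i=1|x,y)$ and $P(b_i=1|y)=P(a_i=0,b_i=1|x,y)$ respectively, so the two products in each line coincide and the child probability vanishes. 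Thus all three Hardy zeros carry over.

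For (ii), the first product identity at $(A_0,B_0)$ gives
\begin{align}
p_H^{(n)}=\alpha^n-\beta^n=p_{_{\text{Hardy}}}\sum_{k=0}^{n-1}\alpha^{n-1-k}\beta^k,\nonumber
\end{align}
with $\alpha=P(a=0|A_0)$ and $\beta=\alpha-p_{_{\text{Hardy}}}$. The prefactor exceeds $1$, and scales roughly linearly in $n$, whenever $\alpha,\beta$ lie close to $1$. To realize this regime inside $\mathcal{Q}$, take the convex family $H(t)=t\,\mathrm{H}_Q^{\max}+(1-t)\,\mathrm{P}_{L_1}$; the vertex $\mathrm{P}_{L_1}$ satisfies all three Hardy zeros individually (with $\alpha=\beta=1$), so $H(t)$ is a quantum Hardy correlation for every $t\in(0,1]$, with $p_{_{\text{Hardy}}}(t)=t\,p_{_{\text{Hardy}}}^{\max}$ arbitrarily small yet $\alpha(t),\beta(t)\to 1$ as $t\to 0$. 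Choosing $n$ to grow inversely with $t$ then amplifies $p_H^{(n)}$ from a vanishing initial strength toward the quantum Hardy bound $(5\sqrt{5}-11)/2$.

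The main technical obstacle is not the algebraic verification in (i), which is uniformly handled by the elementary fact that a zero joint probability collapses a marginal onto a single outcome, but rather the need in (ii) to exhibit a \emph{bona fide} quantum Hardy family in the distillable regime $\alpha\to 1$. This in turn relies on the existence of a local deterministic vertex compatible with all three Hardy zeros of the chosen CHSH symmetry (here $\mathrm{P}_{L_1}$), so that convexity keeps the trajectory inside both the quantum set and the Hardy manifold while pushing the marginal parameters toward unity.
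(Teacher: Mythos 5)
Your proof is essentially correct but reaches the structural part by a genuinely different and arguably cleaner route than the paper. The paper proves preservation of the Hardy form by invoking the characterization of quantum Hardy correlations in this CHSH symmetry as convex mixtures of $\mathrm{P}_{NL}$ and exactly five local vertices (Eq.~(\ref{q-hardy})) and then explicitly computing the two-copy child as a mixture of the \emph{same} six vertices; you instead derive the general product formulas $p^{(n)}(0,0|x,y)=\prod_i P(a_i{=}0|x)-\prod_i P(a_i{=}0,b_i{=}1|x,y)$ and its partner for $(1,1)$, and propagate the three Hardy zeros via the observation that a vanishing joint probability collapses a marginal. This is more general (it works for any no-signaling Hardy box, not only the quantum six-vertex form, and gives all $n$ at once without induction) and it transparently recovers the paper's key formula $p^{(n)}_{_{\text{Hardy}}}=(\tfrac{c_0}{2}+c_1)^n-c_1^n$ as $\alpha^n-\beta^n$ with $\alpha=P(a{=}0|A_0)=\tfrac{c_0}{2}+c_1$ and $\beta=\alpha-p_{_{\text{Hardy}}}=c_1$. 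For the distillation part you use the same family $t\,\mathrm{H}_Q^{\max}+(1-t)\mathrm{P}_{L_1}$ and the same mechanism ($n$ scaling as $1/t$) as the paper, so the two arguments coincide there.

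One quantitative claim should be corrected: the distilled success does \emph{not} tend ``toward the quantum Hardy bound $(5\sqrt{5}-11)/2\approx 0.09$.'' With $\alpha(t)=1-(3-\sqrt{5})t$ and $\beta(t)=1-(1-c_1)t$, setting $n=s/t$ and letting $t\to 0$ gives $p^{(n)}_{_{\text{Hardy}}}\to e^{-(3-\sqrt{5})s}-e^{-(1-c_1)s}$, whose maximum over $s$ is $\approx 0.041$ (and $\approx 0.0433$ after optimizing over the underlying Hardy family), as the paper computes in its Supplemental. This does not affect the validity of the theorem --- a non-vanishing distilled strength from arbitrarily weak parents is all that is claimed --- but the asymptotic value you should quote is $\approx 0.041$, not the quantum maximum. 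You also leave the optimization over $n$ at the level of a sketch; to make the argument complete you would need to exhibit the optimal scaling $n\simeq s^{*}/t$ explicitly (the paper's Proposition~\ref{prop1} supplies the exact optimizer).
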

\begin{proof}
Within the considered CHSH symmetry [{\it i.e.}, Eq.(\ref{symmetry})], any NS correlation can be represented as a convex mixture of the nonlocal vertex along with $8$ local vertices. A quantum Hardy correlation demands nonzero weights for exactly $5$ different local vertices along with the nonlocal vertex \cite{Brito2019}. This, in turn, ensures that the correlation matrix has precisely $3$ zero elements corresponding to probabilities taking the value zero in the Hardy's test. It turns out that, then any quantum Hardy correlation  $\mathrm{H}_Q$, can be represented as 
\begin{align}
\mathrm{H}_Q&\!=\!c_0\mathrm{P}_{NL}\!+\!\!\sum_{i=1}^5c_i\mathrm{P}_{L_i};~~c_i>0~\forall~i,~~\mbox{and}~\sum_{i=0}^5 \!c_i\!=\!1,\label{q-hardy}
\end{align}
with success in Hardy's test $p_{_\text{Hardy}}=\frac{c_0}{2}$. On applying the OR-AND wiring to $2$ copies of (parents) $\mathrm{H}_Q$ one obtains a resulting (child) Hardy correlation $\mathrm{H}^{(2)}_Q$ with nonlocality strength $p^{(2)}_{_\text{Hardy}}=(\frac{c_0}{2}+c_1)^2-c_1^2$. Similarly, on applying OR-AND protocol to $n$ copies of $\mathrm{H}_Q$ we obtain a (child) Hardy correlation $\mathrm{H}^{(n)}_Q$ with nonlocality strength $p^{(n)}_{_\text{Hardy}}=(\frac{c_0}{2}+c_1)^n-c_1^n$. The protocol serves the purpose of an effective $n$ copy distillation as long as $p^{(k)}_{_\text{Hardy}}>p^{(k-1)}_{_\text{Hardy}}$, for all $k\in \{2,3,...,n\}$. For example, let us consider the class of quantum correlations
\begin{align}
\tilde{\mathrm{H}}_Q(\lambda):=\lambda\mathrm{H}^{\max}_Q+(1-\lambda)\mathrm{P}_{L_1},~~\lambda\in(0,1],\label{lambda-hardy}
\end{align}
for which the Hardy success probability $p_{_\text{Hardy}}(\lambda)=\lambda k_1=\frac{\lambda}{2}(5\sqrt{5}-11)$. Then, on applying the OR-AND protocol  to two copies of (parent) $\tilde{\mathrm{H}}_Q(\lambda)$ yields the child $\tilde{\mathrm{H}}^{(2)}_Q(\lambda)$ with Hardy success probability $p^{(2)}_{_\text{Hardy}}(\lambda)=[\lambda (k_1+2k_2)+2(1-\lambda)]\lambda k_1$. Then it follows that $p^{(2)}_{_\text{Hardy}}(\lambda)>p_{_\text{Hardy}}(\lambda)$ for $\lambda\in(0,\phi^{-1})$, where $\phi$ is the golden ratio, {\it i.e.}, $\phi=\frac{1+\sqrt{5}}{2}$. On considering a sufficiently large copies ($n$) of very weakly nonlocal correlations $\tilde{\mathrm{H}}_Q$ with $p_{_\text{Hardy}}(\lambda) \rightarrow 0$, we find that the OR-AND wiring results in a Hardy correlation with a considerably large nonlocal strength $p^{(n)}_{_\text{Hardy}}=0.041$ (see Supplemental).   
\end{proof}
In the Supplemental we further shown that arbitrarily weak quantum Hardy correlation can be distilled up to $0.0433$. Since a correlation with Hardy success $p_{_{\text{Hardy}}}$ yields CHSH value $2+4p_{_{\text{Hardy}}}$ \cite{Cereceda2000}, in order to measure the efficacy of distillation let us define the \emph{Tsirelson gain in percentage} as follows,
\begin{align}
\Delta\mathcal{T}:=\frac{1}{2(\sqrt{2}-1)}(\mathcal{B}_{\text{Child}}-\mathcal{B}_{\text{Parent}})\times 100\%.    
\end{align}
Manifestly, the gain will be $100\%$ when by wiring a nonlocal correlation with arbitrary small CHSH violation, the distilled correlation achieves Tsirelson's bound -- the maximum CHSH value in $\mathcal{Q}$ \cite{Cirelson1980}. We then obtain that under the OR-AND wiring a quantum Hardy correlation can yield Tsirelson gain at-most $20.9\%$.  

While applying multi-copy OR-AND protocol it turns out that the optimal distillation of Hardy's success is obtained with a threshold number of initial boxes, and the success gets decreased if more number of initial boxes are considered. Our next proposition provides an exact expression for the optimal number of initial Hardy correlation required for maximal distillation.
\begin{proposition}\label{prop1}
A no-signaling Hardy correlation of the form of Eq.(\ref{q-hardy}) yields Hardy success $p^{(n)}_{_{\text{Hardy}}}=(\frac{c_0}{2}+c_1)^{n}-c_1^{n}$, when its $n$-copy is wired under OR-AND protocol. The optimal value of distilled Hardy success is given by $p^{opt}_{_{\text{Hardy}}}=\max \{ p^{(\mathrm{N})}_{_{\text{Hardy}}} ,~p^{(\mathrm{N}+1)}_{_{\text{Hardy}}}\}$, where
\begin{align}
\mathrm{N}:=\left\lfloor\log\left(\frac{\log{c_1}}{\log{\frac{c_0}{2}+c_1}}\right)/\log\left(\frac{\frac{c_0}{2}+c_1}{c_1}\right)\right\rfloor. 
\label{Ncopy}
\end{align}
\end{proposition}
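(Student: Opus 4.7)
The plan is to treat the number of copies as a continuous real variable, locate the unique maximum of the one-variable function $f(n) := (\tfrac{c_0}{2}+c_1)^{n} - c_1^{n}$ by elementary calculus, and then observe that the integer optimum must lie at one of the two integers bracketing this continuous maximizer. The closed form $p^{(n)}_{_{\text{Hardy}}} = (\tfrac{c_0}{2}+c_1)^{n} - c_1^{n}$ has already been supplied by Theorem~\ref{theo1}, so the proposition reduces to a pure one-variable optimization.

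Setting $a := \tfrac{c_0}{2}+c_1$ and $b := c_1$, I first note that $0 < b < a < 1$, since $c_0,c_1 > 0$ and these are strict entries of a convex combination with the other positive weights $c_2,\dots,c_5$. I would differentiate to get
\begin{equation*}
f'(n) \;=\; a^{n}\ln a \,-\, b^{n}\ln b,
\end{equation*}
and set $f'(n)=0$, which rearranges to $(a/b)^{n} = \ln b/\ln a$. The right-hand side is a constant strictly greater than $1$ (since $b<a<1$ gives $\ln b < \ln a < 0$), while the left-hand side is strictly increasing in $n$ (since $a/b>1$), so the equation has a unique real solution. Solving it explicitly yields
\begin{equation*}
n^{*} \;=\; \frac{\log\!\bigl(\log c_{1}\,/\,\log(\tfrac{c_0}{2}+c_1)\bigr)}{\log\!\bigl((\tfrac{c_0}{2}+c_1)/c_1\bigr)}.
\end{equation*}

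The next step is to verify that $n^{*}$ is indeed the global maximizer of $f$ on $(0,\infty)$. Evaluating the derivative at the endpoints, $f'(0^{+}) = \ln(a/b) > 0$ whereas $f'(n) \to 0^{-}$ as $n\to\infty$, because $(a/b)^{n}\to\infty$ forces the negative term $a^{n}\ln a$ to dominate $b^{n}\ln b$ in magnitude. Combined with the uniqueness of the critical point, this shows that $f$ is strictly increasing on $(0,n^{*})$ and strictly decreasing on $(n^{*},\infty)$, so $n^{*}$ is the unique maximizer.

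Finally, since the number of wired copies must be a positive integer, the optimum is attained at either $\lfloor n^{*}\rfloor$ or $\lceil n^{*}\rceil$; writing $\mathrm{N}:=\lfloor n^{*}\rfloor$ recovers Eq.~(\ref{Ncopy}) and delivers $p^{opt}_{_{\text{Hardy}}} = \max\{p^{(\mathrm{N})}_{_{\text{Hardy}}},\,p^{(\mathrm{N}+1)}_{_{\text{Hardy}}}\}$. I anticipate no substantive obstacle; the only delicate bookkeeping is keeping track of signs when manipulating logarithms of numbers in $(0,1)$, which is why one must verify that both the inner ratio $\log c_{1}/\log(\tfrac{c_0}{2}+c_1)$ and the outer base $(\tfrac{c_0}{2}+c_1)/c_1$ exceed $1$ — precisely what the hypothesis $c_0 > 0$ guarantees.
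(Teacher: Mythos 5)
Your proposal is correct and follows essentially the same route as the paper: relax the copy number to a continuous variable, show $f(n)=(\tfrac{c_0}{2}+c_1)^n-c_1^n$ has a unique critical point $n^*$ which is a global maximum, and conclude the integer optimum is $\lfloor n^*\rfloor$ or $\lceil n^*\rceil$. The only (minor) difference is that you justify unimodality via the signs of $f'$ at $0^+$ and at infinity, while the paper instead invokes the two-copy distillation condition to argue $f$ is initially increasing; your sign bookkeeping for logarithms of numbers in $(0,1)$ is sound.
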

Proof provided in Supplemental. In the next, we rather consider quantum nonlocal correlation, not necessarily in Hardy form, to establish an even higher Tsirelson gain through the OR-AND wiring.
\begin{theorem}\label{theo2}
Starting with a quantum correlation with arbitrarily small CHSH nonlocality OR-AND wiring can yield Tsirelson gain up to $(\approx)~39.75\%$.
\end{theorem}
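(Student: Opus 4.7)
The plan is to exhibit a concrete one-parameter family of quantum parents with vanishing CHSH violation on which the asymptotic OR-AND gain can be computed explicitly (up to a one-dimensional numerical optimization) and shown to attain the claimed percentage. I would take parents of the form
\begin{align}
P(\lambda) = \lambda\, P_{Q^{\star}} + (1-\lambda)\, P_{L_1},\quad \lambda\in(0,1],
\end{align}
where $P_{Q^{\star}}$ is a Tsirelson-saturating quantum correlation---for concreteness, the symmetric CHSH-optimal point with unbiased marginals $\alpha(x)=\beta(y)=1/2$ and $\langle xy\rangle=(-1)^{1+\delta_{x,0}\delta_{y,0}}/\sqrt{2}$, attainable on the maximally entangled two-qubit state by standard projective measurements. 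Convexity of $\mathcal{Q}$ ensures $P(\lambda)\in\mathcal{Q}$ for every $\lambda$, and linearity of the CHSH functional gives $\mathcal{B}(P(\lambda))=2+2(\sqrt{2}-1)\lambda$, which tends to the local bound as $\lambda\to 0$.

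Because the OR-AND wiring dictates $a=0$ iff every $a_i=0$ and $b=1$ iff every $b_i=1$, the $n$-copy child depends on the parent only through the one-party marginals $\alpha(x)=p(a=1|x)$, $\beta(y)=p(b=1|y)$ and the corner joint $\gamma(x,y):=p(0,1|x,y)$, via
\begin{align}
\alpha^{(n)}(x) &= 1-[1-\alpha(x)]^n,\\
\beta^{(n)}(y) &= [\beta(y)]^n,\\
\gamma^{(n)}(x,y) &= [\gamma(x,y)]^n.
\end{align}
Substituting these into the identity $\mathcal{B}=-2+4\alpha(1)-4\beta(1)+4[\gamma_{01}+\gamma_{10}+\gamma_{11}-\gamma_{00}]$, which follows from no-signaling and normalization, yields an explicit closed-form expression for $\mathcal{B}^{(n)}$ in terms of $\lambda$ and $n$.

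The decisive step is the joint limit $\lambda\to 0^{+}$, $n\to\infty$ with $\mu:=n\lambda$ held fixed: each factor $[1-\lambda r]^n$ converges to $e^{-\mu r}$, so $\mathcal{B}^{(n)}$ reduces to a smooth single-variable function of $\mu$ that equals $2$ at both $\mu=0$ and $\mu=\infty$ and admits a unique interior maximum. A one-dimensional numerical optimization for the chosen $P_{Q^{\star}}$ yields $\mathcal{B}^{\max}_{\text{Child}}\approx 2.329$ near $\mu\approx 1$; since $\mathcal{B}(P(\lambda))\to 2$ in the same limit, the definition of $\Delta\mathcal{T}$ converts this into an asymptotic Tsirelson gain of $(2.329-2)/[2(\sqrt{2}-1)]\times 100\%\approx 39.75\%$.

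The principal difficulty is not the computation itself but the justification of this particular $P_{Q^{\star}}$: the Tsirelson surface $\mathcal{Q}\cap\{\mathcal{B}=2\sqrt{2}\}$ contains a continuous family of extremal quantum correlations, and the asymptotic child CHSH depends nontrivially on which member (and which local vertex) is picked. To tighten the ``up to $\approx 39.75\%$'' statement one would either repeat the asymptotic computation over the full admissible family and take the supremum, or supply a matching upper bound via a linearization argument near $P_{L_1}$. Quantum realizability of the parent never becomes an obstacle, as convexity of $\mathcal{Q}$ reduces it to a single check at $P_{Q^{\star}}$.
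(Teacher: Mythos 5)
Your proposal is correct and follows essentially the same route as the paper: the parent family $\lambda P_{Q^\star}+(1-\lambda)P_{L_1}$ with $P_{Q^\star}$ the unbiased Tsirelson-saturating point is exactly the paper's $\tilde{\mathrm{B}}_Q(\lambda)$, your $n$-copy reduction via $\alpha,\beta,\gamma$ reproduces the paper's closed form for $\mathcal{B}^{(n)}(\lambda)$, and your scaling limit $\mu=n\lambda$ is just a cleaner phrasing of the paper's ansatz $N^{\mathrm{opt}}=\alpha/\lambda$ with $\alpha\approx1.047$, yielding the same peak value $\approx2.329$ and gain $\approx39.75\%$. Your closing caveat about optimizing over the choice of $P_{Q^\star}$ is fair but not needed, since the theorem only asserts achievability of this gain.
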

Proof of Theorem \ref{theo2} follows similar arguments as of Theorem \ref{theo1}. However, for the sake of completeness detailed proof is provided in supplementary file.  

Theorem \ref{theo2} has many interesting implications for information processing tasks, wherein higher CHSH violation is desirable for higher performance of the protocols. For instance, the amount of certifiable randomness as obtained in \cite{Pironio2010} monotonically scales with the degree of violation of CHSH inequality. On the other hand, the authors in \cite{Pappa2015} come up with a conflicting interest Bayesian game where the payoff in correlated equilibrium strategy increases linearly with the amount of CHSH violation (see also \cite{Roy2016}). More recently, the authors in \cite{Frenkel2022} proposed a communication task where preshared entanglement between sender and receiver is shown to enhance the communication utility of a perfect classical communication channel. As it turns out payoff of this task is also a linear function of the value of CHSH expression \cite{Patra2022}.\\

By now, an observant reader might have already noticed that the local box $\mathrm{P}_{L_1}$ plays crucial role in the proof of Theorem \ref{theo1} and Theorem \ref{theo2}. We will now use this observation to prove a generic result as formalized in the following theorem.
\begin{theorem}\label{theo3}
CHSH nonlocality of any no-singling correlation of the form $\tilde{\mathrm{C}}(\lambda)=\lambda\mathrm{C}+(1-\lambda)P_{L_1}$, where $0<\lambda\leq 1$ and $\mathrm{C}\in\mathrm{ConvexHull}~\{\mathrm{P}_{NL},\mathrm{P}_{L_i}~|~i\in\{1,\cdots,8\}\}$  can be distilled through OR-AND wiring by choosing the values of $\lambda$ sufficiently small. Furthermore,  %(radius of the eight dimensional Ball with center at $P_{L_1}$), 
$2$-copy OR-AND distillation is successful for all the $\tilde{C}(\lambda)$ correlation boxes whenever $\lambda <\frac{2}{3}c_0$; where $c_0$ is the $\mathrm{P}_{NL}$ fraction in $\mathrm{C}$.
\end{theorem}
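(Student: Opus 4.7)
The plan is to reduce the 2-copy OR-AND on $\tilde{\mathrm{C}}(\lambda)$ to a bilinear expression by exploiting the fact that $\mathrm{P}_{L_1}$ acts as an identity element for OR-AND, and then control the gain by a single scalar quantity. The key structural observation is that $\mathrm{P}_{L_1}$ deterministically outputs $a=0$ (the identity of $\vee$) and $b=1$ (the identity of $\wedge$), so $\text{OR-AND}(\mathrm{P}_{L_1},\mathrm{Q})=\text{OR-AND}(\mathrm{Q},\mathrm{P}_{L_1})=\mathrm{Q}$ for every box $\mathrm{Q}$. Plugging the decomposition $\tilde{\mathrm{C}}(\lambda)=\lambda\mathrm{C}+(1-\lambda)\mathrm{P}_{L_1}$ into the bilinear OR-AND map therefore yields
\begin{align}
\tilde{\mathrm{C}}^{(2)}(\lambda)=\lambda^2\,\text{OR-AND}(\mathrm{C},\mathrm{C})+2\lambda(1-\lambda)\mathrm{C}+(1-\lambda)^2\mathrm{P}_{L_1}.
\end{align}
Using $\mathcal{B}(\mathrm{C})=2+2c_0$ and $\mathcal{B}(\mathrm{P}_{L_1})=2$, the parent CHSH value is $\mathcal{B}_{\mathrm{parent}}=2+2\lambda c_0$, and linearity of the CHSH functional collapses the gain into
\begin{align}
\mathcal{B}_{\mathrm{child}}-\mathcal{B}_{\mathrm{parent}}=2\lambda c_0-\lambda^2\bigl[(2+4c_0)-\mathcal{B}(\text{OR-AND}(\mathrm{C},\mathrm{C}))\bigr].
\end{align}

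From this identity, the first claim is immediate: for every fixed $\mathrm{C}$ with $c_0>0$, the gain is strictly positive whenever $\lambda<2c_0/[(2+4c_0)-\mathcal{B}(\text{OR-AND}(\mathrm{C},\mathrm{C}))]$, which is always a strictly positive (or vacuously infinite) threshold since $\mathcal{B}\leq 4$ holds for every NS correlation. The sharp $\lambda<\tfrac{2}{3}c_0$ of the second claim is equivalent, after rearranging the same identity, to the universal lower bound
\begin{align}
\mathcal{B}(\text{OR-AND}(\mathrm{C},\mathrm{C}))\geq 4c_0-1
\end{align}
holding for every $\mathrm{C}$ in the nine-vertex simplex. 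I would establish this by a second bilinear expansion $\mathrm{C}=c_0\mathrm{P}_{NL}+\sum_{i=1}^{8}c_i\mathrm{P}_{L_i}$, writing $\mathcal{B}(\text{OR-AND}(\mathrm{C},\mathrm{C}))=\sum_{k,l}c_kc_l\,\mathcal{B}(\text{OR-AND}(P_k,P_l))$ and then checking the resulting quadratic inequality by tabulating the $9\times 9$ matrix of CHSH values of OR-AND applied to pairs of extremal boxes.

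The main obstacle is this enumeration, but the OR/AND collapse rules reduce the work considerably: any parent with a constant output $a=1$ dominates Alice's OR output, any parent with $b=0$ dominates Bob's AND output, and $\mathrm{P}_{L_1}$ is the identity on both sides. Most entries therefore coincide with a single extremal local vertex with $\mathcal{B}=2$, and only the few pairs whose output lands on the opposite CHSH facet (with $\mathcal{B}=-2$) require careful attention. Given the anchor values $\mathcal{B}(\text{OR-AND}(\mathrm{P}_{NL},\mathrm{P}_{NL}))=3$ (which already saturates the bound $4c_0-1$ at $c_0=1$) and $\mathcal{B}(\text{OR-AND}(\mathrm{P}_{NL},\mathrm{P}_{L_1}))=4$, verification of the remaining cases reduces to checking non-negativity of a single explicit quadratic form in the weights $c_i$, from which the $\tfrac{2}{3}c_0$ threshold drops out.
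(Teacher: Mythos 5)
Your proposal is correct and follows essentially the same route as the paper: both hinge on $\mathrm{P}_{L_1}$ being a two-sided identity for the OR-AND wiring, expand $\tilde{\mathrm{C}}^{(2)}(\lambda)=\lambda^2\mathrm{C}^{(2)}+2\lambda(1-\lambda)\mathrm{C}+(1-\lambda)^2\mathrm{P}_{L_1}$ by bilinearity, and reduce the gain to the linear-in-$\lambda$ condition $2c_0+\bigl(\mathcal{K}^{(2)}-2\mathcal{K}+2\bigr)\lambda>0$. Your reformulation of the $\lambda<\tfrac{2}{3}c_0$ threshold as the bound $\mathcal{B}(\mathrm{C}^{(2)})\geq 4c_0-1$, verified by a finite check over pairs of extremal boxes, is the same computation the paper performs by writing $\mathcal{K}^{(2)}$ as an explicit quadratic in the weights $c_i$ and showing its minimum over the simplex gives $\min f=-3$ at $c_0=1$.
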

\begin{proof}
Given two correlations $\chi_1,\chi_2\in\mathcal{NS}$, let $\mathcal{W}[\chi_1,\chi_2]$ denotes the resulting correlation obtained under OR-AND wiring, where $\mathcal{W}[\chi,\chi]\equiv\chi^{(2)}$. We, therefore, have
\begin{align*}
\tilde{\mathrm{C}}^{(2)}(\lambda)&=\lambda^2\mathrm{C}^{(2)}+\lambda(1-\lambda)\left\{\mathcal{W}\left[\mathrm{C},P_{L_1}\right]+\mathcal{W}\left[P_{L_1},\mathrm{C}\right]\right\}\\
&~~~~~~~+(1-\lambda)^2\mathcal{W}\left[P_{L_1},P_{L_1}\right].
\end{align*}
A straightforward calculation yields, $ \mathcal{W}\left[\mathrm{C},P_{L_1}\right]=\mathrm{C}=\mathcal{W}\left[P_{L_1},\mathrm{C}\right]$, that further result in 
\begin{align*}
\tilde{\mathrm{C}}^{(2)}(\lambda)=\lambda^2\mathrm{C}^{(2)}+2\lambda(1-\lambda)\mathrm{C}+(1-\lambda)^2P_{L_1}. \end{align*}
While CHSH value of the box $P_{L_1}$ is $2$, let the CHSH value of the boxes $\mathrm{C}$ and $\mathrm{C}^{(2)}$ be denoted as $\mathcal{K}~(>2)$ and $\mathcal{K}^{(2)}$ respectively. Then, the CHSH value of the correlations $\tilde{\mathrm{C}}(\lambda)$ and $\tilde{\mathrm{C}}^{(2)}(\lambda)$ can be expressed as,
\begin{align*}
\mathcal{B}(\lambda)&=\lambda~\mathcal{K}+2~(1-\lambda),\\
\mathcal{B}^{(2)}(\lambda)&=\lambda^2~\mathcal{K}^{(2)}+2~\lambda~(1-\lambda)\mathcal{K}+2~(1-\lambda)^2.
\end{align*}
A successful distillation demands $\mathcal{B}^{(2)}(\lambda)>\mathcal{B}(\lambda)$, implying $(\mathcal{K}-2)+(\mathcal{K}^{(2)}-2\mathcal{K}+2)\lambda>0$, which can be guaranteed by choosing the values for $\lambda$ accordingly. This completes first part of the proof. A bit more calculation yeilds the quantitative bound $\lambda<\frac{2}{3}c_0$, on the radius of the eight dimensional Ball centered at point ${\rm P}_{L_1}$ assuring $2$-copy OR-AND distillation such that any nonlocal no-signaling correlation, be it quantum or post-quantum, choosen from a nonzero-measure sector of the Ball can be distilled (in the full eight dimensions). We provide the detailed calculation in the appended Supplemental Material .   
\end{proof}
Theorem \ref{theo3} has profound topological implication. It establishes that the sets of no-signalling as well as quantum correlations allowing nonlocality distillation have non-zero measure in the full eight dimensional correlation space. Furthermore, it should be mentioned that the correlation box $\mathrm{P}_{L_1}$ appearing in Theorem \ref{theo3} is not any special local deterministic box: the result holds also for all the remaining $15$ local deterministic boxes on suitable relabeling of the OR-AND wiring. 

While the studies in nonlocality distillation of quantum correlations are mostly limited to the 2-2-2 Bell scenario, Theorem \ref{theo3} opens up an avenue to study the same in a general N-M-K scenario that involves N spatially separated parties each performing M different measurements with K outcomes. It is not hard to find the extreme local boxes in such a general scenario. Now if for such an extreme box $P_{L}$ we obtain a wiring $\mathrm{W}$ such that $\mathrm{W}[P_{L},\mathrm{X}]=X=\mathrm{W}[\mathrm{X},P_{L}]$ for any N-M-K no-signaling correlation $\mathrm{X}$, then it results in a generalization of Theorem \ref{theo3} in the N-M-K scenario. This consequently will imply that quantum  correlations allowing nonlocality distillation have non-zero measure even in this general scenario.

{\it Distillation of post-quantum nonlocality.--} We now proceed to show that OR-AND wiring also has an important proviso in ruling out (unphysical) post-quantum correlations. Several techniques are there to establish post-quantumness of a given correlations. For instance, isotropic no-signaling correlations yielding CHSH value more than $4\sqrt{2/3}$ violate the principle of nontrivial communication complexity \cite{Brassard2005} (see also \cite{vanDam2005,Brassard2005(1),Buhrman2010}), thus demarcating such correlations as unphysical. Furthermore, any NS correlating with CHSH value more than Tsirelson bound violates the principle of information causality \cite{Pawlowski2009,Miklin2021}. It has also been shown that a correlation might not violate these principles by its own, but after distillation the resulting correlation violates such a principle, which, in turn, establish unphysicality of the original correlation \cite{Brunner2009} (see also \cite{Eftaxias2022}).

Interestingly, the OR-AND wiring becomes useful to establish post-quantum nature of a correlation. Let us consider the following NS correlation 
\begin{align}
\mathrm{H}_{NS}&=0.1~\mathrm{P}_{NL}+0.85~\mathrm{P}_{L_1}\nonumber\\
&+0.01~(\mathrm{P}_{L_2}+\mathrm{P}_{L_3}+2~\mathrm{P}_{L_4}+\mathrm{P}_{L_5}), \end{align}
which exhibits Hardy's nonlocality with success probability $p_{_{\text{Hardy}}}=0.05$. However, correlation $\mathrm{H}^{(8)}_{NS}$ obtained by distilling $8$ copies of $\mathrm{H}_{NS}$ has the Hardy success $p^{opt}_{_{\text{Hardy}}}=0.15797$, which ensures that the boxes $\mathrm{H}^{(8)}_{NS}$ and $\mathrm{H}_{NS}$ are post-quantum. As it turns out the correlation $\mathrm{H}_{NS}$ neither violates known necessary condition for respecting the principal of nontrivial communication complexity nor for the principle of information causality. However, the considered example violates the macroscopic locality principle \cite{Navascues2009} (see Supplemental). That being said, we do point out that checking membership to the NPA hierarchy can become computationally expensive,
particularly at higher orders of the hierarchy, while the distillation criteria is far more tractable (see Supplemental).

{\it Discussion.--} Distillation, the process of extracting a desirable substance in pure form from a source of impure mixture through heating and other means, has an ancient history. Quite interestingly, during recent past, the idea finds novel applications in quantum information theory, where one aims to obtain fewer number of higher resourceful states starting with larger number of lesser resourceful states under free operations~\cite{Chitambar}. Some canonical examples are: (i) for our present study, the resource theory of Bell-nonlocal boxes~\cite{Vicente,Wolfe}, or (ii) the well known protocols for quantum entanglement distillation, where many copies of mixed entangled states are distilled into pure form under local quantum operations and classical communications~\cite{Bennett1996(1)}.

In this letter, we have established a generic approach for distillation of nonlocal correlations arising in quantum mechanics. This problem is of utmost importance as Bell nonlocal correlations are ubiquitous in device independent protocols -- more the nonlocality more the utility. Interestingly, we come up with an elegant protocol, the OR-AND wiring, that distils nonlocality in quantum correlations with high efficiency. In the simplest bipartite scenario, in stark distinction with the results reported prior to our work \cite{Forster2009,Brunner2009,Short2009,Brunner2011,Forster2011,HoyerJibran2010,Jibran2012,HoyerJibran2013, Ebbe2014,Tuziemski2015,Beigi2015,Brassard2015,Brito2019,Eftaxias2022,Allcock2009,Lang2014}, our protocol establishes that, within the set of full eight dimensional correlation space, the distillable quantum as well as no-signaling nonlocal correlations form subsets of non-zero measures; i.e., sector of open balls of a specified radius centered at local deterministic correlations. Moreover, by considering correlations arbitrarily close to local deterministic points, applying our protocol, with optimal number of copies, one can distill nonlocality by a significant amount both for the quantum as well as post-quantum non-signaling correlations. As for future, it would be interesting to explore the full potential of our generic framework proposed here in distilling quantum nonlocal correlations. In particular, obtaining some bound on the relative volume of the quantum correlations in the correlation space that can be distilled under OR-AND wiring would be interesting. Furthermore a generalization of this protocol for higher input-output as well as in multiparty scenario might be of great use.  
\begin{acknowledgments}
We gratefully acknowledge fruitful discussions and feedback from Guruprasad Kar, Giorgos Eftaxias, Roger Colbeck, Kieran Flatt and Joonwoo Bae at different stages of this work. M. B. acknowledges funding from the National Mission in Interdisciplinary Cyber-Physical systems from the Department of Science and Technology through the I-HUB Quantum Technology Foundation (Grant No. I-HUB/PDF/2021-22/008), support through the research grant of INSPIRE Faculty fellowship from the Department of Science and Technology, Government of India, and the start-up research grant from SERB, Department of Science and Technology (Grant No. SRG/2021/000267). A. R. is supported by National Research Foundation of Korea (NRF-2021R1A2C2006309), Institute of Information Communications Technology Planning and Evaluation (IITP) Grant (Grant No. RS-2023-00229524, the ITRC Program/IITP-2023-2018-0-01402).
\end{acknowledgments}

\maketitle
\onecolumngrid    
\section{SUPPLEMENTAL}
\section{Appendix A: Analysis of Theorem 1}\label{appendix-A}
Before starting the main analysis of the Theorem-1, first note that any NS correlation $\mathrm{C}\in\mathcal{NS}$ can be represented in the following matrix form: 
\begin{align}
\mathrm{C}\equiv\begin{blockarray}{ccccc}
xy/ab& ~~~~00~~~~ & ~~~~01~~~~ & ~~~~10~~~~ & ~~~~11~~~~ \\\\
\begin{block}{c(cccc)}
00 & p(00|00) & p(01|00) & p(10|00) & p(11|00) \\\\
01 & p(00|01) & p(01|01) & p(10|01) & p(11|01) \\\\
10 & p(00|10) & p(01|10) & p(10|10) & p(11|10) \\\\
11 & p(00|11) & p(01|11) & p(10|11) & p(11|11) \\
\end{block}
\end{blockarray}
\end{align}
where $p(ab|xy)$ denotes the probability of obtaining outcome $a$ on Alice's side and outcome $b$ on Bob's side given there inputs $x$ and $y$ respectively; $a,b,x,y\in\{0,1\}$. Consider the Bell-CHSH inequality defined in the main text of the paper, {\it i.e.},
\begin{align}
\mathcal{B}\equiv \langle00\rangle-\langle01\rangle-\langle10\rangle-\langle11\rangle \leq 2.
\end{align}
There is only one PR-box that violates the considered inequality, {\it i.e.}, yields CHSH value $\mathcal{B}=4$; and there are exactly eight extremal local boxes for which $\mathcal{B}=2$. These boxes are given by,
\begin{align*}
\mathrm{P}_{NL}\equiv\mathrm{P}^{110}_{NL}
&=\begin{pmatrix}
\frac{1}{2} & 0 & 0 & \frac{1}{2} \\
0 & \frac{1}{2} & \frac{1}{2} & 0 \\
0 & \frac{1}{2} & \frac{1}{2} & 0 \\
0 & \frac{1}{2} & \frac{1}{2} & 0 \\
\end{pmatrix};~~~
\mathrm{P}_{L_1}\equiv\mathrm{P}^{0001}_{L_1}=\begin{pmatrix}
0 & 1 & 0 & 0 \\
0 & 1 & 0 & 0\\
0 & 1 & 0 & 0\\
0 & 1 & 0 & 0\\
\end{pmatrix};~~~
\mathrm{P}_{L_2}\equiv\mathrm{P}^{0100}_{L_2}=\begin{pmatrix}
0 & 0 & 1 & 0 \\
0 & 0 & 1 & 0\\
0 & 0 & 1 & 0\\
0 & 0 & 1 & 0\\
\end{pmatrix};\\\\
\mathrm{P}_{L_3}\equiv\mathrm{P}^{0111}_{L_3}&=\begin{pmatrix}
0 & 0 & 0 & 1 \\
0 & 0 & 1 & 0\\
0 & 0 & 0 & 1\\
0 & 0 & 1 & 0\\
\end{pmatrix};~~~~~
\mathrm{P}_{L_4}\equiv\mathrm{P}^{1101}_{L_4}=\begin{pmatrix}
0 & 0 & 0 & 1 \\
0 & 0 & 0 & 1\\
0 & 1 & 0 & 0\\
0 & 1 & 0 & 0\\
\end{pmatrix};~~~
\mathrm{P}_{L_5}\equiv\mathrm{P}^{1111}_{L_5}=\begin{pmatrix}
0 & 0 & 0 & 1 \\
0 & 0 & 1 & 0\\
0 & 1 & 0 & 0\\
1 & 0 & 0 & 0\\
\end{pmatrix};\\\\
\mathrm{P}_{L_6}\equiv\mathrm{P}^{1000}_{L_6}&=\begin{pmatrix}
1 & 0 & 0 & 0 \\
1 & 0 & 0 & 0\\
0 & 0 & 1 & 0\\
0 & 0 & 1 & 0\\
\end{pmatrix};~~~~~
\mathrm{P}_{L_7}\equiv\mathrm{P}^{0010}_{L_7}=\begin{pmatrix}
1 & 0 & 0 & 0 \\
0 & 1 & 0 & 0\\
1 & 0 & 0 & 0\\
0 & 1 & 0 & 0\\  
\end{pmatrix};~~~
\mathrm{P}_{L_8}\equiv\mathrm{P}^{1010}_{L_8}=\begin{pmatrix}
1 & 0 & 0 & 0 \\
0 & 1 & 0 & 0\\
0 & 0 & 1 & 0\\
0 & 0 & 0 & 1\\
\end{pmatrix}.
\end{align*}
In Theorem 1 we have showed that OR-AND protocol preserves the quantum Hardy structure and moreover it can distil non-local strength of Hardy correlations. Here, we analyse this theorem in detail.

First we explicitly show that Hardy structure is preserved by the OR-AND wiring. For this, consider the quantum Hardy correlation 
\begin{align}
\mathrm{H}_Q&=c_0\mathrm{P}_{NL}\!+\!\sum_{i=1}^5c_i\mathrm{P}_{L_i};~~c_i>0~~\forall~i,~~\mbox{and}~~\sum_{i=0}^5 c_i=1.\label{HQ1}
\end{align}
With little algebra, it can be shown that OR-AND wiring applied to two copies of $\mathrm{H}_Q$ results in a correlation
\begin{align}
\mathrm{H}^{(2)}_Q=&2\left(\left(\frac{c_0}{2}+c_1\right)^2-c_1^2\right) \mathrm{P}_{NL}+c_1^2\mathrm{P}_{L_1}+\left(c_0\left(1-\frac{c_0}{2}\right)+2c_2\left(1-\frac{c_2}{2}\right)-c_0\left(c_1+c_2\right)\right)\mathrm{P}_{L_2}\nonumber\\
&+c_3\left(2-c_0-c_3-2c_2\right)\mathrm{P}_{L_3}+c_4\left(c_0+2c_1+c_4\right)\mathrm{P}_{L_4}+c_5 \left(c_0 + 2 c_1 + 2 c_4 + c_5\right)\mathrm{P}_{L_5}.\label{HQ2}
\end{align}
Eq.(\ref{HQ2}) is clearly in the form Eq.(\ref{HQ1}) (i.e., both are convex mixture of same set of one nonlocal and five local boxes), and therefore the correlation resulting from wiring is a Hardy non-local correlation, establishing that OR-AND protocol preserves Hardy structure of quantum Hardy correlations. It is also clear from the expression that 2-copy OR-AND wiring distils nonlocality whenever $\frac{c_0}{2}+2c_1>1$.
\begin{figure}[b!]
\centering
\includegraphics[width=0.45\textwidth]{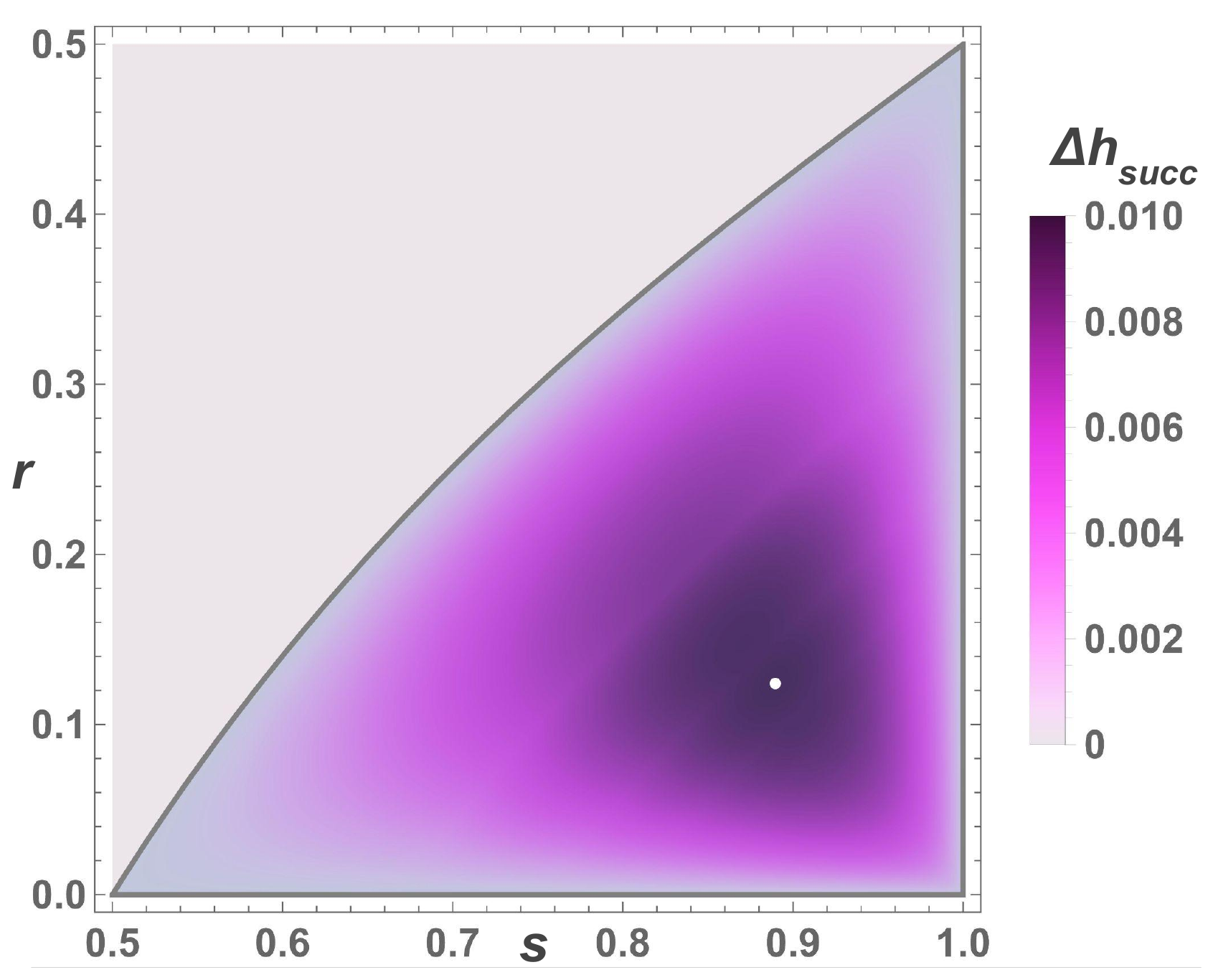}
\caption{The variation of distillation gap with $r$ and $s$. The coloured region satisfies Eq.(\ref{region}) and the distillation gap of an $(r,s)$ point is mapped by the color as shown in the colorbar. The white point denotes the correlation with maximal distillation gap.}
\label{Fig:A1}
\end{figure}

Now, let us turn to the distillability of quantum Hardy correlations under multi copy OR-AND wiring. In general,  here we consider a two parameter family of quantum Hardy correlations which also includes the maximum quantum Hardy correlation used in Theorem 1. The family of such correlations is as follows, 
\begin{align}
\mathrm{H}_{\mathcal{Q}}^{(r,s)}&\equiv\begin{blockarray}{ccccc}
xy/ab& 00 & 01 & 10 & 11 \\
\begin{block}{c(cccc)}
00 & \frac{(1-r)r(1-s)s}{1-rs} & \frac{(1-r)^2s}{1-rs} & (1-r)(1-s) & r \\\\
01 & 0 & (1-r)s & \frac{1-s}{1-rs} & \frac{(1-r)rs^2}{1-rs} \\\\
10 & 0 & s & \frac{(1-r)(1-s)}{1-rs} & \frac{r(1-s)^2}{1-rs} \\\\
11 & \frac{r(1-s)s}{1-rs} & \frac{(1-r)s}{1-rs} & 1-s & 0 \\
\end{block}
\end{blockarray}\label{qhardy22}
\end{align}
where, $r,s\in [0,1]$. The measurements performed by Alice and Bob are given by 
\begin{subequations}
\begin{align}
\mbox{Alice's measurements}&:~\left\{x_0\equiv\{\ket{0}\bra{0},\ket{1}\bra{1}\},~~~~~x_1\equiv\{\ket{u_0}\bra{u_0},\ket{u_1}\bra{u_1}\}\right\},\\
\mbox{Bob's measurements}&:~\left\{y_0=\{\ket{0}\bra{0},\ket{1}\bra{1}\},~~~~~y_1=\{\ket{v_0}\bra{v_0},\ket{v_1}\bra{v_1}\}\right\},
\end{align}\label{mesh}
\end{subequations}
where, the first term in the brackets corresponds to the outcome $0$ and second term the outcome $1$, and 
\begin{align}
\vert u_0\rangle&:=\mathbf{C}_\alpha\vert 0\rangle + e^{\iota \phi}\mathbf{S}_\alpha\vert 1\rangle,~~~\vert u_1\rangle:=-\mathbf{S}_\alpha\vert 0\rangle + e^{\iota \phi}\mathbf{C}_\alpha\vert 1\rangle,\\
\vert v_0\rangle&:=\mathbf{C}_\beta\vert 0\rangle + e^{\iota \xi}\mathbf{S}_\beta\vert 1\rangle,~~~~\vert v_1\rangle:=-\mathbf{S}_\beta\vert 0\rangle + e^{\iota \xi}\mathbf{C}_\beta\vert 1\rangle,\\
\mbox{with}~~\mathbf{C}_z&:=\cos(z/2),~~\mathbf{S}_z:=\sin(z/2)~~ \mbox{and}~~\alpha,\beta\in[0, \pi]~~\&~\phi, \xi \in [0, 2\pi).\nonumber
\end{align}
The two-qubit state on which the measurements are performed, yielding the correlation (\ref{qhardy22}), is given by 
\begin{align}
\ket{\psi}_{_{Hardy}}:=\frac{\ket{u_0v_0}+\mathbf{W}_{\alpha}\ket{u_1v_0}+\mathbf{W}_{\beta}\ket{u_0v_1}}{\sqrt{1+\mathbf{W}_{\alpha}^2+\mathbf{W}_{\beta}^2}};~~~~\mathbf{W}_{z}:=\cot(z/2)
\end{align}
where $r:=1-\mathbf{S}^2_{\alpha}\mathbf{S}^2_{\beta}$ and $s:=r^{-1}\mathbf{C}^2_{\alpha}$ \cite{Rai2022}. It can be verified that maximal Hardy non-locality is achieved when $r=s=\frac{1}{2}\left(\sqrt{5}-1\right)$, resulting in Eq.(3) of the main text. Correlation $\mathrm{H}_{\mathcal{Q}}^{(r,s)}$ can be distilled using OR-AND protocol {\it if and only if}
\begin{align}
r^2(s+s^2)-r(s^2+2s)+(2s-1)>0.  \label{region}
\end{align}
Furthermore, it turns out that the region of distillability remains the same for the $n$-copy wiring. One can take a step further, and quantify the distillability of these correlations under the $n$-copy OR-AND protocol using distillation gap, defined as $\Delta h{_\text{succ}}:=p^{(N^{\text{opt}})}_{_\text{Hardy}}-p_{_\text{Hardy}}$, where $N^{\text{opt}}$ is the optimal number of copies obtained from Proposition 1. The resulting variation of distillation gap with $r$ and $s$ is shown in Fig.\ref{Fig:A1}. The colored region in the density plot satisfies Eq.(\ref{region}) and the distillation gap of a point is mapped by the color as indicated by the colorbar. The white point in the graph denotes the optimal distillation of approximately $0.0101896$, obtained when $4$ copies of the quantum correlation $\mathrm{H}_{\mathcal{Q}}^{(r,s)}$ identified by $r\approx0.1241$ and $s\approx0.8896$ is used.

Interestingly, as shown in the main text, much higher distillation gap can be obtained when one considers convex mixtures of the correlations given in Eq.(\ref{qhardy22}) and $\mathrm{P}_{L_1}$. More strikingly, high amount of nonlocality can be distilled from mixtures with vanishingly small contribution from quantum Hardy correlations. For example, consider the correlation from Eq.(5) of the main text given by,
\begin{align*}
\tilde{\mathrm{H}}_Q(\lambda):=\lambda\mathrm{H}^{\max}_Q+(1-\lambda)\mathrm{P}^{[1]}_L,~~\lambda\in(0,1].
\end{align*}
The Hardy success of this correlation is $p_{_\text{Hardy}}(\lambda)=\lambda k_1=\frac{\lambda}{2}(5\sqrt{5}-11)$. Now, after using the $n$-copy OR-AND protocol, the resulting Hardy success is given by
\begin{align*}
p^{(n)}_{_\text{Hardy}}(\lambda)=(c_0/2+c_1)^n-c_1^n=\left(\left(\sqrt{5}-3\right) \lambda +1\right)^n-\left(\frac{\sqrt{5}}{2} \left(\sqrt{5}-3 \right) \lambda +1\right)^n.
\end{align*}
Now, assuming very small $\lambda$ and Taylor expanding around $\lambda=0$, it turns out that the leading order term in the expression for optimal number of copies (see Proposition 1) is of the order $\lambda^{-1}$. Hence, for very small values of $\lambda$, we can ignore the floor function arising in $\mathrm{N}^{opt}$. Thus, we have the following for $ p^{(\mathrm{N}^{opt})}_{_\text{Hardy}}$ in the small $\lambda$ limit,
\begin{align}\label{QMaxNopt}
p^{(\mathrm{N}^{opt})}_{_\text{Hardy}}(\lambda)&=4^{(2+\sqrt{5})}~ 5^{(-\frac{5}{2}-\sqrt{5})} \left(\sqrt{5}-2\right)-4^{(1+\sqrt{5})}~5^{(-2-\sqrt{5})} \left(\sqrt{5}-3\right)\log \left(\frac{5}{4}\right)\lambda+O\left(\lambda ^2\right)\nonumber\\
&\approx 0.0410237+0.0165601 \lambda+O\left(\lambda^2\right)
\end{align}
Eq.(\ref{QMaxNopt}) tells that even as $\lambda$ approaches $0$, corresponding to vanishingly small initial non-locality, the optimal use of the $n$-copy OR-AND can distil a high nonlocality of approximately $0.0410237$. In fact, this increases linearly with $\lambda$.
\begin{figure}[t!]
\centering
\includegraphics[width=0.4\textwidth]{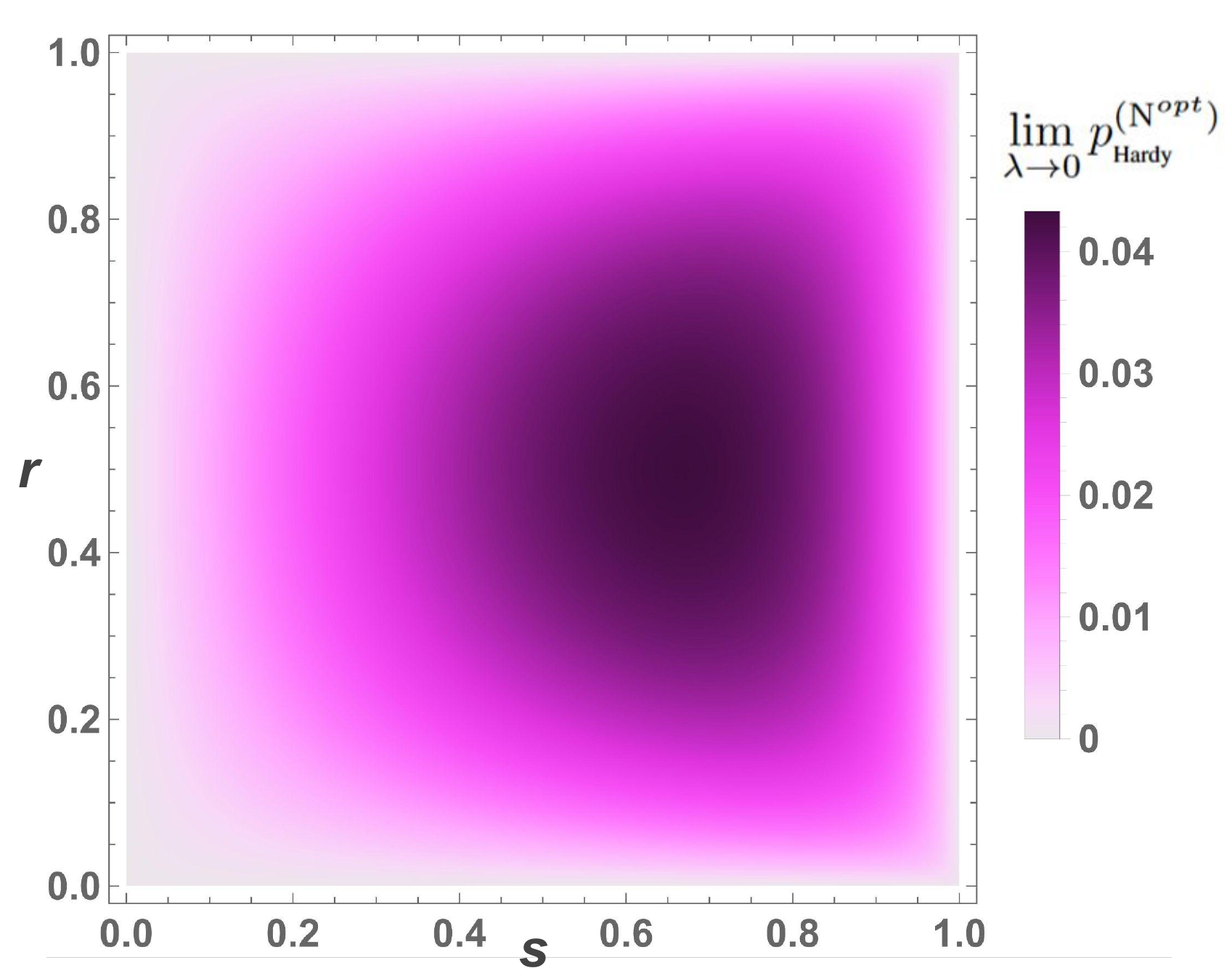}
\caption{The variation of distilled Hardy success in the $\lambda\rightarrow 0$ with $r$ and $s$. The Hardy success of an $(r,s)$ point is mapped by the color as shown in the colorbar.}
\label{Fig:A2}
\end{figure}

In the above, we have considered the mixture of optimal quantum Hardy correlation $\mathrm{H}^{\max}_Q$ with $\mathrm{P}_{L_1}$. We can consider the same scenario for arbitrary $\mathrm{H}^{(r,s)}_Q$. Then again, in the small $\lambda$ limit, one can follow similar steps to arrive at the following expression for the optimal Hardy success in the limit of vanishing $\lambda$,
\begin{align}\label{QMaxNoptrs}
\lim_{\lambda\rightarrow 0}p^{(\mathrm{N}^{opt})}_{_\text{Hardy}}(r,s,\lambda)&=\frac{(1-r) r (1-s) s \left(\frac{1-s+(1-r) r s}{1-s+(1-r) r s^2}\right)^{\left(1-\frac{1}{(1-r) rs}-\frac{1}{1-s}\right)}}{1-s+(1-r) r s}.
\end{align}
A density plot of Eq.(\ref{QMaxNoptrs}) is shown in Fig.\ref{Fig:A2}, where the distilled Hardy success of the point $(r,s)$ is indicated by the color of the point. Moreover, one can find the optimal value of Eq.(\ref{QMaxNoptrs}) to be approximately $0.0433049$, attained when $r=\frac{1}{2}$ and $s=\frac{2}{3}$. This is interesting, since it means that the mixture of a weakly non-local Hardy correlation with local noise can distil more nonlocality than a mixture with the optimal quantum nonlocal Hardy correlation and the same local noise. This again reveals the nontrivial structure of nonlocality distillation. Finally, note that despite the ability of the OR-AND protocol to distill high Hardy success from a mixture of $\mathrm{H}^{(r,s)}_Q$ and $\mathrm{P}_{L_1}$, one can numerically verify that the distilled success of this mixture is never greater than the distilled success of  $\mathrm{H}^{(r,s)}_Q$ itself.

\section{Appendix B: Proof of Proposition 1}
\begin{proof}
Condition for $\mathrm{n}$ copy distillation demands that at least two copy distillation must be successful, i.e., $\mathrm{n}\geq 2$. This imposes the condition $c_1>\frac{1}{2}-\frac{c_0}{2}$ (along with other conditions $c_0\geq 0$, $c_1\geq 0$, and $c_0+c_1\leq 1$ on the coefficients $c_0$ and $c_1$ of the respective boxes $\mathrm{P}_{NL}$ and $\mathrm{P}_{L_1}$). Let us consider a continues and smooth function $f(x)=\left(\frac{c_0}{2}+c_1\right)^{x}-\left(c_1\right)^{x}$ obtained by substituting the number of copies $\mathrm{n}$ with a continuous variable $x\in[1,\infty)$ in the expression for Hardy's success probability $p^{n}_{_{\text{Hardy}}}=(\frac{c_0}{2}+c_1)^{n}-c_1^{n}$. The derivative $\frac{df(x)}{dx}$ takes the value zero at exactly one point given by 
\begin{equation}
  x^{\ast}=\log\left(\frac{\log{c_1}}{\log{\frac{c_0}{2}+c_1}}\right)/\log\left(\frac{\frac{c_0}{2}+c_1}{c_1}\right).
\end{equation}
 Then, one can conclude that the critical point $x^{\ast}$ is the point of global maxima of $f(x)$ because due to at least two copy distillation condition $f(x)$ must be an increasing function in the beginning, then it achieves the global maxima at $x^{\ast}$ and starts decreasing when $x> x^{\ast}$. Let us define $\mathrm{N}=\lfloor x^{\ast}\rfloor$, then since $f(x^{\ast})\geq f(x)$ for all $x\in[1,\infty)$ it easily follows that the optimal value of distilled Hardy success is given by $p^{opt}_{_{\text{Hardy}}}=\max \{ p^{\mathrm{(N)}}_{_{\text{Hardy}}} ,~p^{(\mathrm{N}+1)}_{_{\text{Hardy}}}\}$. The optimal number of copies of initial boxes yielding maximum distillation of Hardy success, i.e., $\mathrm{N}^{opt}$ is either $\mathrm{N}$ or $\mathrm{N}+1$; the one which gives maximum Hardy success. This completes the proof.
\end{proof}

\section{Appendix C: Proof and Analysis of Theorem 2}
\begin{proof}
	Consider a one parameter family of quantum correlations given by
	\begin{align}\label{bql}
		\tilde{\mathrm{B}}_Q(\lambda)&:=\lambda \mathrm{B}^{\max}_Q+(1-\lambda)\mathrm{P}_{L_1},~~\lambda\in(0,1],~~\mbox{where},\\\nonumber
		\mathrm{B}_{Q}^{\max}&:=(\sqrt{2}-1)\mathrm{P}_{NL}+\frac{1}{4}\left(1-\frac{1}{\sqrt{2}}\right)\sum_{i=1}^8\mathrm{P}_{L_i}. 
	\end{align}
	Note that the correlation $\mathrm{B}_{Q}^{\max}$ saturates the Tsirelson's bound in $\mathcal{Q}$. Manifestly, $\tilde{\mathrm{B}}_Q(\lambda)$ yields CHSH value  $\mathcal{B}(\lambda)=2+(2\sqrt{2}-2)\lambda$. After $2$-copy OR-AND wiring the resulting correlation $\tilde{\mathrm{B}}^{(2)}_Q(\lambda)$ attains CHSH value $\mathcal{B}^{(2)}(\lambda)=2+2(2\sqrt{2}-2)\lambda+(11/4-3\sqrt{2})\lambda^2$. Distillation is successful whenever $\lambda<\frac{8}{167}(13-\sqrt{2})\approx0.555$. While appropriate choice of $\lambda$ yields maximal Tsirelson gain $(\approx)~13.9\%$ with $2$-copy distillation of the correlation $\tilde{\mathrm{B}}_Q(\lambda)$, multi-copy OR-AND wirings can yield a maximal gain $(\approx)~39.75\%$ and is obtained via distillation of parent correlation with arbitrary small value of $\lambda$ (see Supplemental). 
\end{proof}
In the proof to Theorem 2, we have showed that a mixture containing $\mathrm{P}_{L_1}$ and arbitrarily small amount of $\mathrm{B}^{\text{max}}_{Q}$ can be distilled using the OR-AND protocol. Now, we numerically show that by using the optimal number of copies, we can distill to a high CHSH value of 2.32.

For this, first note that, given an arbitrary correlation $\mathrm{P}\equiv\{p(ab|xy)\}$ the resulting correlation after n-copy OR-AND protocol is given by,
\begin{subequations}\label{ncopy}
\begin{align}
    p^{(n)}(00|xy)&=(p(00|xy)+p(01|xy))^n-p(01|xy)^n,\\
    p^{(n)}(01|xy)&=p(01|xy)^n, \\
    p^{(n)}(11|xy)&=(p(11|xy)+p(01|xy))^n-p(01|xy)^n,\\
    p^{(n)}(10|xy)&=1-p^{(n)}(00|xy)-p^{(n)}(01|xy)-p^{(n)}(11|xy).
\end{align}
\end{subequations}
Eq.(\ref{ncopy}) can be easily verified using mathematical induction starting from the 2-copy OR-AND protocol. Using Eq.(\ref{ncopy}), we arrive at the following expression for the distlled CHSH value ($\mathcal{B}^n(\lambda)$) of the correlation $\tilde{\mathrm{B}}_Q(\lambda)$ using the n-copy OR-AND protocol:
\begin{align}
\label{ncopychsh}
\mathcal{B}^{(n)}(\lambda)&=2-8 \left(1-\frac{\lambda }{2}\right)^n+12 \left(1-\frac{1}{8} \left(6-\sqrt{2}\right)
\lambda\right)^n-4 \left(1-\frac{1}{8} \left(6+\sqrt{2}\right) \lambda \right)^n.
\end{align}
Eq.(\ref{ncopychsh}) can now be used to estimate the maximum CHSH value that can be distilled in the $\lambda\rightarrow 0$ limit. However, this scenario is more difficult compared to Appendix:A since we do not have an analytical expression for the optimal number of copies for maximal distillation of the CHSH value. 
\begin{figure}[b!]
\centering
\includegraphics[width=0.4\textwidth]{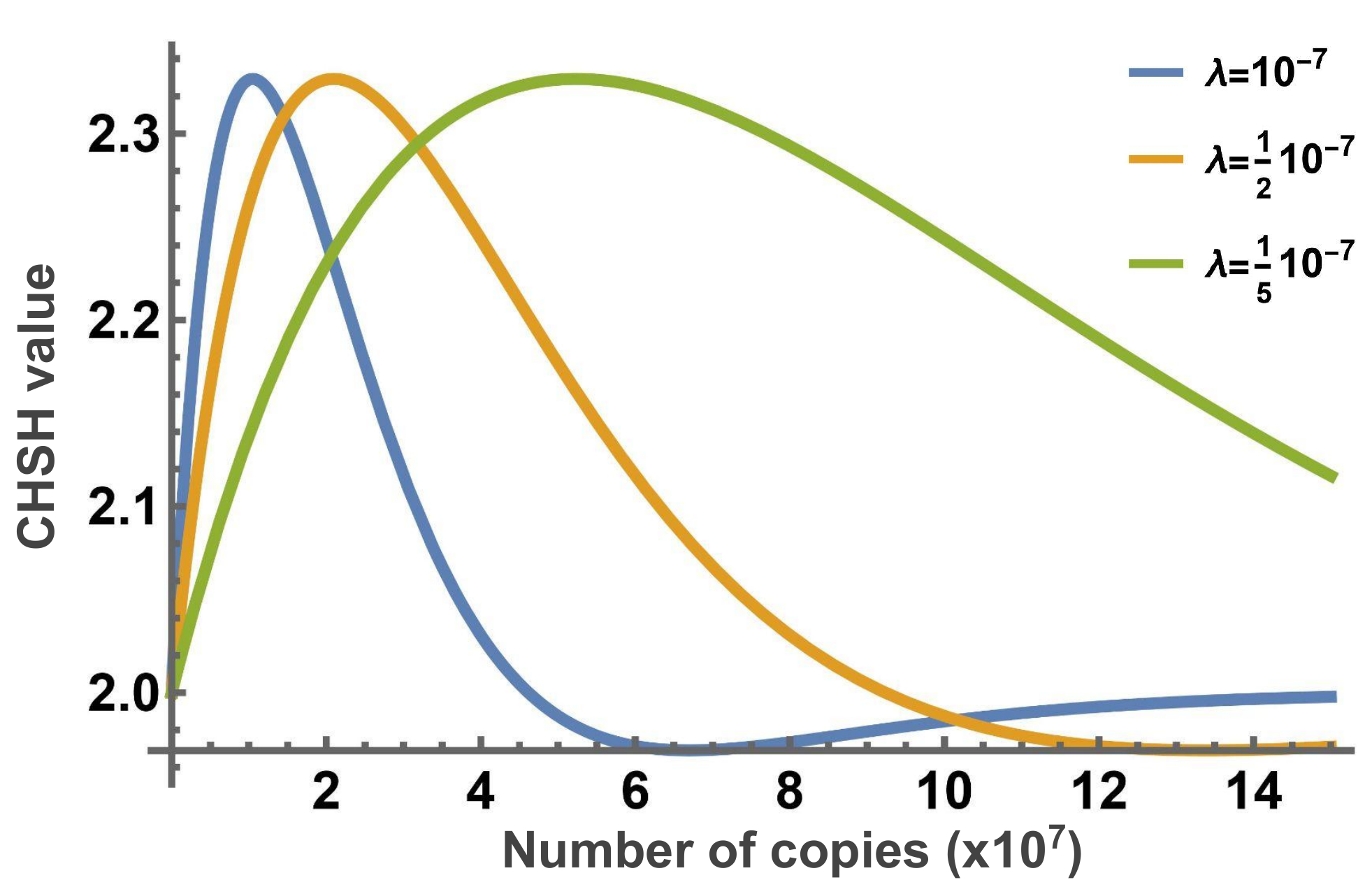}
\caption{The distilled CHSH value of the correlation $\tilde{\mathrm{B}}_Q(\lambda)$ is plotted as a function of the number of copies for different values of $\lambda$. The figure shows that optimal distilled CHSH value is around 2.32928 for the different values of small $\lambda$ considered, yielding the Tsirelson gain, $\Delta\mathcal{T}\approx\frac{0.32928}{2\sqrt{2}-2}\times 100\%\approx39.748\%$}.
\label{Fig:D1}
\end{figure}

Let us start by considering a small value of $\lambda=10^{-7}$ and studying the variation of the distilled CHSH value with the number of copies (see Fig.\ref{Fig:D1}). As seen from Fig.\ref{Fig:D1}, the distilled value for $\lambda=10^{-7}$ initially increases with the number of copies, reaches a maximum of 2.32928 for $1.04739\times10^7$ copies and then decays to the classical value of 2. Note that the CHSH value drops below 2 before eventually saturating at 2. Moreover, Fig.\ref{Fig:D1} shows that peak value does not change significantly as $\lambda$ is reduced to $\lambda=\frac{1}{2}10^{-7}$ or $\frac{1}{5}10^{-7}$ and that overall behaviour is similar upto some re-scaling. 

This suggests that the optimal number of copies is inversely proportional to $\lambda$, in the small $\lambda$ limit. This motivates an ansatz for the optimal number of copies $N^{\text{opt}}=\frac{\alpha}{\lambda}$, for $\alpha$ unknown. Since the derivative for $\mathcal{B}^{(n)}(\lambda)$ with $n$ vanishes at the optimal number of copies, solving $\lim_{\lambda\rightarrow0}\frac{d\mathcal{B}^{(n)}(\lambda)}{dn} \Bigr\rvert_{\frac{\alpha}{\lambda}}=0$ can be used to find $\alpha$. This can in turn be done using Taylor expansion, yielding the value of $\alpha=1.047391$. One can further numerically show that this ansatz is robust in the sense that for $\lambda<10^{-5}$, $\frac{d\mathcal{B}^{(n)}(\lambda)}{dn} \Bigr\rvert_{\frac{\alpha}{\lambda}} < 10^{-11}$. Thus, using this ansatz, we can obtain the following expression for $\mathcal{B}^{(N^{\text{opt}})}(\lambda)$ when $\lambda < 10^{-5}$:
\begin{align}
    \mathcal{B}^{(N_{\text{opt}})}(\lambda)\approx2.32928 + 0.169161\lambda+O(\lambda^2).
\end{align}
This proves the claim that as $\lambda\rightarrow 0$, the AND-OR protocol can distil a large non-locality of approximately 2.32928.

\section{Appendix D: Extended proof of Theorem 3}
Given two correlations $\chi_1\equiv\{p_1(ab|xy\}$ and $\chi_2\equiv\{p_2(ab|xy\}$ in $\mathcal{NS}$, the resulting correlation obtained through OR-AND wiring is denoted as $\mathcal{W}[\chi_1,\chi_2]\equiv\{p(ab|xy\}$. The input-output joint probabilities for the resulting correlation reads as
\begin{subequations}
\begin{align}
p(00|xy)&=\sum_{b_1\wedge b_2=0}p_1(0b_1|xy)p_2(0b_2|xy);\\
p(01|xy)&=p_1(01|xy)p_2(01|xy);\\
p(11|xy)&=\sum_{a_1\vee a_2=1}p_1(a_11|xy)p_2(a_21|xy);\\
p(10|xy)&=1-p(00|xy)-p(01|xy)-p(11|xy);
\end{align}
\end{subequations}
for all $x,y\in\{0,1\}$. Now, let us consider two correlation $\mathrm{C}\equiv\{c(ab|xy)\}\in \mathcal{NS}$ and $\mathrm{P}_{L_1}=\{p_{_{L_1}}(ab\vert xy)=\delta_{a,0}~\delta_{b,1}\}$. We denote $\mathcal{W}[\mathrm{C},\mathrm{P}_{L_1}]\equiv\{p_{_{\mathrm{C},\mathrm{L}_1}}(ab|xy)\}$, then on substituting the probabilities of the local box $\mathrm{P}_{L_1}$, above stated relations for two OR-AND wired boxes reduce to:
\begin{subequations}
\begin{align}
p_{_{\mathrm{C},\mathrm{L}_1}}(00|xy)&=c(00\vert xy);\\
p_{_{\mathrm{C},\mathrm{L}_1}}(01|xy)&=c(01\vert xy);\\
p_{_{\mathrm{C},\mathrm{L}_1}}(10|xy)&=c(10\vert xy);\\
p_{_{\mathrm{C},\mathrm{L}_1}}(11|xy)&=c(11\vert xy);
\end{align}
\end{subequations}
for all $x,y\in\{0,1\}$. Therefore, $\mathcal{W}[\mathrm{C},\mathrm{P}_{L_1}]=\mathrm{C}$. Similarly, by substituting the probabilities of the $\mathrm{P}_{L_1}$ box it also easily follows that $\mathcal{W}[\mathrm{P}_{L_1},\mathrm{C}]=\mathrm{C}$,  and $\mathcal{W}[\mathrm{P}_{L_1},\mathrm{P}_{L_1}]=\mathrm{P}_{L_1}$.

Further, for obtaining a quantitative bound on the the radius of the eight-dimensional Ball centered at $\mathrm{P}_{L_1}$ within which any nonlocal correlation, in a sector of nonzero-measure, allows $2$-copy OR-AND distillation, we write the correlation $\mathrm{C}$ in terms of the coefficients of the PR-box and eight local boxes, then
\begin{equation}
\mathrm{C}= c_0\mathrm{P}_{NL}+\sum_{i=1}^{8}c_i\mathrm{P}_{L_{i}}.
\end{equation}
Now on calculating, the respective Bell-CHSH values $\mathcal{K}$ and $\mathcal{K}^{(2)}$ of respective correlations $\mathrm{C}$ and $\mathrm{C}^{(2)}$ we find that
\begin{align}
\mathcal{K}&= 2(1+ c_0),\\
\mathcal{K}^{(2)}&= 2+c_0^2+4 c_0 (c_1+2 c_4)+8 c_4 (c_1+c_2+c_3+c_5+c_6-1)+8 c_4^2-8 c_5 c_7.
\end{align}
In the above we have used the normalization condition $\sum_{i=0}^8c_i=1$ to eliminate the coefficient $c_8$. Then the condition for two copy distillation can be written as follows: 
\begin{align}
(\mathcal{K}-2)+(\mathcal{K}^{(2)}-2\mathcal{K}+2)~\lambda&>0 \nonumber \\
\implies ~2~c_0+ \left\{c_0^2+4 c_0 (c_1+2 c_4-1)+8 \left(c_4 (c_1+c_2+c_3+c_5+c_6-1)+c_4^2-c_5 c_7\right)\right\}\lambda &>0\nonumber \\
\equiv~2~c_0+ f(c_0,c_1,c_2,c_3,c_4,c_5,c_6,c_7)~\lambda &>0.
\end{align}
Since $2 c_0 + \min\{f(c_0,c_1,...,c_7)\}>0 \implies 2 c_0 + f(c_0,c_1,...,c_7) >0$, two copy distillation is possible if $2 c_0 + \min\{f(c_0,c_1,...,c_7)\}>0$. Further, $\min\{f(c_0,c_1,...,c_7)\}=-3$ and it is achieved at $c_0=1$ and $c_1=c_2=c_3=c_4=c_5=c_6=c_7=0$. Therefore, we obtain that for any $NS$ correlation $\tilde{\mathrm{C}}(\lambda)$, two copy distillation is successful if
\begin{align}
2c_0-3\lambda >0 \Rightarrow \lambda <\frac{2}{3}~c_0.
\end{align}
Thus we also have an analytical expression for the radius of the Ball centered at $\mathrm{P}_{L_1}$ such that any nonlocal no-signaling correlation $\tilde{\mathrm{C}}(\lambda)$, be it quantum or post-quantum, chosen from a nonzero-measure sector of the Ball can be distilled (in full eight dimensions).

\section{Appendix E: Post-quantum certification through OR-AND distillation}
In the main manuscript we have already discussed that the correlation $\mathrm{H}_{NS}$ given by,
\begin{align}\label{corr}
\mathrm{H}_{NS}&=0.1~\mathrm{P}_{NL}+0.85~\mathrm{P}_{L_1}+0.01~\mathrm{P}_{L_2}+0.01~\mathrm{P}_{L_3}+0.02~\mathrm{P}_{L_4}+0.01~\mathrm{P}_{L_5},\\
&=\begin{blockarray}{ccccc}
xy/ab& ~~~~00~~~~ & ~~~~01~~~~ & ~~~~10~~~~ & ~~~~11~~~~ \\
\begin{block}{c(cccc)}
00 & 0.05 & 0.85 & 0.01 & 0.09 \\
01 & 0 & 0.90 & 0.08 & 0.02 \\
10 & 0 & 0.93 & 0.06 & 0.01 \\
11 & 0.01 & 0.92 & 0.07 & 0 \\
\end{block}
\end{blockarray}
\end{align}
is post quantum. While its Hardy success is $0.05$, after OR-AND distillation the Hardy success gets increased. The optimal distillation is achieved for $8$-copy parent correlation and the resulting child correlation reads as
\begin{align}
\mathrm{H}^{(8)}_{NS} &=0.31594\mathrm{P}_{NL}+0.27249~\mathrm{P}_{L_1}+0.23246~\mathrm{P}_{L_2}+0.04999~\mathrm{P}_{L_3}+0.08275~\mathrm{P}_{L_4}+0.0436~\mathrm{P}_{L_5},\\
&=\begin{blockarray}{ccccc}
xy/ab& ~~~~~~~~00~~~~~~~~ & ~~~~~~~~01~~~~~~~~ & ~~~~~~~~10~~~~~~~~ & ~~~~~~~~11~~~~~~~~ \\
\begin{block}{c(cccc)}
00 & 0.157977 & 0.272491 & 0.232454 & 0.337078 \\
01 & 0 & 0.430467 & 0.486781 & 0.0827517 \\
10 & 0 & 0.559582 & 0.390431 & 0.0499871 \\
11 & 0.0463629 & 0.513219 & 0.440418 & 0 \\
\end{block}
\end{blockarray}
\end{align}
Child correlation  $\mathrm{H}^{(8)}_{NS}$ having the Hardy success $0.157977$, which is greater that optimal quantum bound $(5\sqrt{5}-11)/2\approx0.09$, this fact then establishes the  post-quantum nature of the parent correlation $\mathrm{H}_{NS}$.  

Importantly, several other known post-quantum tests fail to detect post-quantumness of the correlation $\mathrm{H}_{NS}$. For instance, correlations having CHSH value more than $4\sqrt{2/3}\approx 3.26599$ violate nontrivial communication complexity \cite{Brassard2005}. However, for the correlations $\mathrm{H}_{NS}$ and $\mathrm{H}^{(8)}_{NS}$ the CHSH values turns out to be $2.2$ and $2.63088$, respectively, which are strictly less than the aforesaid threshold value $3.26599$.

The authors in \cite{Pawlowski2009} have obtained a sufficient criterion for violation of the information causality principle. For the CHSH symmetry considered in this work, the condition reads as
\begin{align}
\mathcal{I}:=E_1^2+E_2^2>1,~~&\mbox{where}~~E_i:=2\mathcal{P}_i-1,~~\mbox{with}\\
\mathcal{P}_1:=\frac{1}{2}[p(a\oplus b=0|00)+p(a\oplus b=1|10)],~~~
&\mathcal{P}_2:=\frac{1}{2}[p(a\oplus b=1|01)+p(a\oplus b=1|11)].
\end{align}
For the correlations $\mathrm{H}_{NS}$ and $\mathrm{H}^{(8)}_{NS}$ the values that $\mathcal{I}$ attain are respectively $0.9578$ and $0.9565$. Therefore information causality remains silent about the post-quantum nature of these boxes. 

However, it may be noted that correlation given in Eq.(\ref{corr}) falls outside the first level of the NPA-hierarchy \cite{Navascues2008}. This easily proves the post-quantumness of the correlation under consideration, and moreover, implies that the correlation violates the principle of macroscopic locality \cite{Navascues2009}.

That being said, we do point out that checking membership to the different levels of NPA hierarchy can become computationally expensive, particularly at higher levels of the hierarchy, while the distillation criteria can be far more computationally tractable. For instance, let us consider the correlation:
\begin{align}\label{corr2}
\mathrm{H}^\prime_{NS}&=\begin{blockarray}{ccccc}
xy/ab& ~~~~00~~~~ & ~~~~01~~~~ & ~~~~10~~~~ & ~~~~11~~~~ \\
\begin{block}{c(cccc)}
00 & 0.0773 & 0.0256 & 0.5599 & 0.3372 \\
01 & 0  & 0.1029 & 0.7804 & 0.1167 \\
10 & 0 & 0.3374 & 0.6372 & 0.0254 \\
11 & 0.1178 & 0.2196 & 0.6626 & 0 \\
\end{block}
\end{blockarray}~~~~.
\end{align}
One may easily verify that correlation given in Eq.(\ref{corr2}) is post-quantum by observing that after 2-copy distillation using OR-AND protocol, the Hardy success goes up to 0.0925 (a value beyond the maximum possible success probaility in quantum mechanics). On the other hand, for the considered correlation, tests like known necessary conditions for violating non-trivial computational complexity and information causality principle fails to detect its post-quantumness. However, unlike the first example, on considering NPA criteria, a membership test into the second tier of the NPA hierarchy is required to establish the post-quantumness of this correlation. Along similar lines, one may imagine post-quantum correlations like Eq.(\ref{corr2}) that lies at further deeper levels of the NPA-hierarchy, while its post-quantumness may be conveniently detected via efficient nonlocality distillation protocols.

\end{document}